\documentclass[oneside]{article}

\usepackage{epsfig}
\usepackage{amssymb,color}
\usepackage{framed}
\usepackage{epsf}
\usepackage{algorithm, algorithmic}
\usepackage{booktabs}
\usepackage{multirow}
\usepackage{graphics,epsfig}
\usepackage[normalem]{ulem}
\usepackage{url}

\if 0
\newtheorem{theorem}{Theorem}[section]

\newtheorem{corollary}[theorem]{Corollary}
\newtheorem{proposition}[theorem]{Proposition}
\newtheorem{lemma}[theorem]{Lemma}
\newtheorem{claim}[theorem]{Claim}
\newtheorem{observation}[theorem]{Observation}

\newtheorem{remark}[theorem]{Remark}
\newtheorem{definition}[theorem]{Definition}
\fi

\newcommand{\qed}{\mbox{}\hspace*{\fill}\nolinebreak\mbox{$\rule{0.6em}{0.6em}$}
}

\definecolor{gray}{rgb}{0.5,0.5,0.5}

\newtheorem{theorem}{Theorem}

\newtheorem{claim}[theorem]{Claim}
\newtheorem{corollary}[theorem]{Corollary}
\newtheorem{definition}[theorem]{Definition}
\newtheorem{remark}[theorem]{Remark}

\newtheorem{observation}[theorem]{Observation}
\newtheorem{statement}[theorem]{Statement}
\newenvironment{proof}{{\bf Proof:}}{$\qed$\par}

\newcommand{\fall}{$\forall$}
\newcommand{\implies}{$\rightarrow$}
\newcommand{\deduce}{$\models$}

\begin{document}

\title{Can we know the Whole Truth?}

\author{
Rina Panigrahy \\
Microsoft Research, Mountain View, CA \\
rina@microsoft.com
}

\maketitle

\newcommand{\Omg}{$\Omega$}

\begin{abstract}
The paper explores known results related to the problem of identifying if a given program terminates on all inputs -- this is a simple generalization of the halting problem. We will see how this problem is related and the notion of proof verifiers. We also see how verifying if a program is terminating involves reasoning through a tower of axiomatic theories -- such a tower of theories is  known as Turing progressions and was first studied by Alan Turing in the 1930's. We will see that this process has a natural connection to ordinal numbers. The paper is presented from the perspective of a non-expert in the field of logic and proof theory. 
\end{abstract}

\section{Introduction}

Science can be viewed as the quest for truth. Perhaps there is such a thing as the `whole truth' and only a small subset of it that we `know to be true' (of course we are ignoring some falsehoods that we think to be true).  Our goal is to increase the set of truths that we know to cover the whole truth; is this even possible? (the whole truth may even include deep philosophical truth but it may not be easy to agree on the definition of such truths.) For simplicity, we can focus on a set of concrete truths where the definition of truth is fairly unambiguous; for example, concrete questions such as `can one extract energy from algae efficiently' or `is there a program that quickly solves the travelling salesman problem'. Different scientific disciplines are interested in truths of different types: while a phycisist is interested in truths about matter and energy, a biololigist is more interested in truths  about humans and animals, and a computer scientist about the behavior of programs. Mathematics is interested in truths about abstract concepts such as numbers, geometric objects. It may seem that the space of abstract concepts is separate from the physical world but this is far from true. First, abstract concepts can be used to model things in the physical world and thus truths in the abstract world can inform us of truths in the physical world. Further certain objects such as programs can be viewed as both objects  in the physical world and as asbtract mathematical entities. Think of a computer program that controls the operation of a catrastrophic weapon -- the question of whether the program reaches a certain state can be formulated as a mathematical question. Thus theorem proving (finding truths in mathematics) should be viewed not just as a mental game of puzzle solving but as in fact it reflects our ability to discover truths in the physical world.

In this paper we will restrict ourselves to truths about programs  and further only to the question of whether a given program terminates or not. It is in this restricted subet of truths that we will investigate the limits of what we know and what we can ever know. In mathematics new truths (theorems) are discovered by the way of  proofs. If a truth has no proof we can never know it; if it has one it is can be known -- of course there is the matter of finding the proof so it is not the case that we can easily know that truth but if we search hard enough for the proof may be we will find it -- thus it is at least knowable.  Thus to understand what truths are knowable we need to understand what is a proof. In practice, proof checking is a psychological process and naturally an important question is whether we can understand  this process precisely.
 
The study leads one into a fascinating journey of (known) facts  in topics from Godel's theorems, ordinal numbers, Turing progressions, and philosophy of reasoning and knowledge. Perhaps much of the scientific community (except for the field of proof theory) Godel's theorem is often understood to mean that there will always be statements that are true that we will never know to be true -- this however is not really known to be true -- all that is known is that a 'mechanical verifier'(a program) can never accepts proofs for all true statements. From an epistemological perspective though what we can know from a given set of axioms (such as Peano Arithmetic) turns out to be harder to quantify -- its power is given by a fascinating and little understood phenomena called Turing progressions which can never be explicitly known to be enumerable. The significance of Turing progressions and its implications to the concept of a 'proof' may have been underplayed in the broad scientific community.  The paper is simply the  lessons learnt  by a non-expert in the field while exploring this question --  everything here is known already and is the result of discussions with several people of which at least one is an expert on the topic.

\section{The Problem and its difficulty}

Which programs terminate and which don't? This is the question we will study here. Specifically, we will be interested in checking if a program that takes inputs terminates on all inputs. This question turns out to be very hard to answer and its investigation reveals several astonishing facts. 

The set of terminating programs is known to be undecidable, non-enumerable (no Turing machine can list them all), and non-provable in an axiomatic system.  Here is what it means for the set of terminating programs to be decidable, enumerable and provable.

\newcommand{\noindentpara}{\vspace{2mm}\noindent}

\noindentpara{\bf Decidable:} There is a Turing machine that always halts and accepts a program if and only if it is terminating.

\noindentpara{\bf Enumerable:} There is a Turing machine that accepts a program if and only if it is terminating. It may reject or simply may not halt on other programs. Clearly to accept terminating programs it must halt at least on those inputs.

\noindentpara{\bf Provable (in an axiomatic system):} There is an axiomatic system in which it is possible to prove for each terminating program that it terminates. It should be possible to build an automated {\em verifier} that verifies proofs in this system. It should not prove a non-terminating machine to be terminating.

\noindentpara{\bf Provable (in practice):} It is possible to write a paper that proves that a certain program terminates. Reviewers should accept the proof. The proof should use as primitive assumptions as possible (such as mathematical induction). A conservative mathematician may prefer to avoid advanced assumptions such as  continuum hypothesis, or even say the axiom of choice in the proof.

Are the above two notions of provable the same? We will see that the answer is far from obvious. 
We will call the programs (Turing machines) of interest as terminating programs. Such a Turing machine T halts on all inputs x, written formally as ``$\forall$ x Halts(T,x)"
 We will say 
Terminating(T) := \fall x Halts(T,x).
 Let 
$\Omega$ = \{T: Terminating(T) \}. Thus \Omg\  is the set of all terminating programs.

\begin{theorem}
The set \Omg\ is not decidable and not enumerable. 
\end{theorem}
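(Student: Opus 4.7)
The plan is to reduce known facts about the halting problem $H = \{(T,x) : T \text{ halts on } x\}$ (namely, that $H$ is recursively enumerable but undecidable) to the set $\Omega$. Non-decidability is immediate from a standard reduction; non-enumerability requires one additional observation about the complement of $H$.

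\textbf{Non-decidability.} First I would reduce the halting problem to membership in $\Omega$. Given an instance $(T,x)$, construct a program $T'$ that ignores its own input and simply simulates $T$ on $x$. Clearly $T'$ is computable uniformly from $(T,x)$, and $T' \in \Omega$ if and only if $T$ halts on $x$. A decider for $\Omega$ would therefore decide the halting problem, a contradiction.

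\textbf{Non-enumerability.} I would first note that the complement $\overline{H}$ is not recursively enumerable: if it were, then since $H$ is r.e., one could decide $H$ by running both enumerators in parallel, contradicting the undecidability of $H$. Next I would reduce $\overline{H}$ to $\Omega$. Given $(T,x)$, let $T'$ be the program that on input $n$ simulates $T$ on $x$ for exactly $n$ steps, halts if $T$ has not yet halted within those $n$ steps, and otherwise enters an infinite loop. If $T$ never halts on $x$, then for every $n$ the simulation terminates without $T$ halting, so $T'$ halts on every input and hence $T' \in \Omega$. If instead $T$ halts on $x$ in some $k$ steps, then for every input $n \geq k$ the program $T'$ loops forever, so $T' \notin \Omega$. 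An enumerator for $\Omega$ would therefore enumerate $\overline{H}$, contradicting the previous paragraph.

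The step that requires the most care is the second reduction: the natural statement ``$T \in \Omega$'' has the $\Pi_2$ form $\forall x\, \exists s\, \text{Halt}(T,x,s)$, whereas ``$T$ does not halt on $x$'' is $\Pi_1$. The construction of $T'$ must arrange that the universal quantifier over inputs of $T'$ plays the role of the universal quantifier over step bounds witnessing non-halting of $T$ on $x$, while simultaneously ensuring $T'$ halts on each individual input (so that membership in $\Omega$ is determined solely by the outer quantifier). Once this reduction is in hand, both non-decidability and non-enumerability follow, and in fact non-decidability is subsumed by non-enumerability.
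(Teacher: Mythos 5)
Your proof is correct. The paper itself never actually proves this theorem --- it is stated as a known fact, with the nearest argument being the diagonalization in Section~\ref{godelproof} against a \emph{verifier}: from any purported enumerator of $\Omega$ one would extract a listing $T_0, T_1, \ldots$ of total programs and build the total program $P(n) = 1 - T_n(n)$, which is in $\Omega$ yet differs from every $T_n$. Your route is genuinely different: a pair of many-one reductions, $H \leq_m \Omega$ for undecidability (the program that ignores its input and simulates $T$ on $x$) and $\overline{H} \leq_m \Omega$ for non-enumerability (the program that on input $n$ runs $T$ on $x$ for $n$ steps and halts iff $T$ has not yet halted), the second being exactly the right device to flip the quantifier so that ``$T$ never halts on $x$'' becomes ``$T'$ halts on every input.'' Both reductions are correct, and your closing remark that non-enumerability subsumes non-decidability is also right, so the first reduction is strictly speaking redundant. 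What the reduction approach buys is that it quarantines all the self-reference inside the one classical fact that $H$ is undecidable and $\overline{H}$ is not r.e.; what the diagonalization approach buys is self-containment and continuity with the rest of the paper, since the very same program $P$ reappears as the witness in Theorem~\ref{godel} and as the seed of the Turing progressions of Section~\ref{tprogressions}.
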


Given a finitely axiomatized theory, one can write a verifier (a program) that verifies proofs in that theory.  Thus given a theorem and  a proof as strings the verifier can check if it is a valid proof in that theory and if it results in the theorem which will usually be the last statement in the proof.  If a verifier accepts a proof of the statement ``Terminating(T)" then for convenience  we will say that it accepts the program T (it can do this non-deterministically by guessing the proof). We will say the verifier is correct if it does not accept any program outside \Omg. Thus for our purpose a verifier is defined as follows:

\begin{definition}
A {\em  verifier} is a program that takes as input a proof, turing machine pair (pi, T) and checks if pi correctly proves that T is a terminating program in  a certain axiomitic system. If it accepts (pi,T) then T must be in $\Omega$. The verifier must halt on all inputs (pi, T).
\end{definition}
 
 G\"{o}del's theorems~\cite{godel} can be used to infer some properties of the set of programs accepted by a verifier.
The following theorem is a direct consequence (rather a restatement) of G\"{o}del's theorems.
\begin{theorem}\label{godel}[G\"{o}del]

\begin{itemize}

\item In any finitely axiomatizable  proof system with a proof verifier, all machines in \Omg\ cannot be proved to be terminating unless it proves a machine outside \Omg\ to be terminating.

\item Further, given the code for a verifier V, one can explicitly write the code for another program P that is not proven to be terminating by V but is in \Omg.
\end{itemize}
\end{theorem}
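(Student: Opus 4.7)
The plan is to reduce both parts to Theorem~1 (the non-enumerability of $\Omega$) via a diagonal construction, with Kleene's recursion theorem supplying the self-reference needed for the explicit program in the second bullet.

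For the first bullet, I would argue by contradiction. Suppose that some finitely axiomatized system has a verifier $V$ which proves every machine in $\Omega$ to terminate while accepting no machine outside $\Omega$. The set $S_V$ consisting of all $T$ for which $V$ accepts some pair $(\pi,T)$ is recursively enumerable: a dovetailing machine cycles through all pairs $(\pi,T)$, runs $V$ (which halts on every input) on each, and outputs $T$ whenever $V$ accepts. Under the two hypotheses, $S_V=\Omega$, which contradicts the non-enumerability of $\Omega$ from Theorem~1. Thus any verifier that proves all of $\Omega$ to be terminating must also accept some program outside $\Omega$.

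For the second bullet, given the code of $V$, I would use Kleene's recursion theorem to construct $P$ explicitly. Define $P$ on input $x$ to enumerate every proof $\pi$ of length at most $x$ and simulate $V$ on the pair $(\pi,\langle P\rangle)$; if some simulation accepts, $P$ enters an infinite loop, otherwise $P$ halts. Each simulation terminates because $V$ is a verifier, so the only way $P$ can fail to halt on input $x$ is that $V$ accepts some $(\pi,\langle P\rangle)$ with $|\pi|\le x$. I then split into cases. If $V$ accepts some $(\pi_0,\langle P\rangle)$, then for every $x\ge |\pi_0|$ the program $P$ diverges, so $P\notin\Omega$; but then $V$ has accepted a program outside $\Omega$, contradicting the correctness requirement in the definition of a verifier. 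Hence $V$ accepts no proof of $P$'s termination, every bounded search inside $P$ finishes fruitlessly, and $P$ halts on every input, placing $P$ in $\Omega$ as required.

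The main obstacle is legitimizing the self-reference in the definition of $P$, i.e.\ writing a program whose own code appears inside the call to $V$. I would resolve this by the standard fixed-point argument: describe the two-argument computable function $f(e,x)$ obtained by substituting $e$ for $\langle P\rangle$ in the informal specification above, and then apply Kleene's recursion theorem to produce a code $e^\star$ such that the program with code $e^\star$ computes exactly $f(e^\star,\cdot)$. The effective proof of the recursion theorem makes $e^\star$ computable from $V$, which is precisely what the phrase ``explicitly write the code for another program $P$'' in the theorem statement requires.
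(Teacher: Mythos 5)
Your proposal is correct, but it takes a genuinely different route from the paper on both bullets. The paper proves everything with one non-self-referential diagonal program: $P(x)$ parses its input $x$ as a pair $(\pi,T)$, asks $V$ whether $\pi$ certifies that $T$ terminates, and if so returns the negation of $T(x)$ (else returns $0$). Since $V$ halts everywhere and only certifies terminating machines, $P$ halts on all inputs; and $P$ differs from every $V$-accepted $T$ on the specific input $(\pi,T)$, so $P$ cannot be among the accepted machines. Here self-reference is obtained for free by feeding each machine its own certificate as input, so no recursion theorem is needed, and the first bullet is an immediate corollary of the explicit witness $P\in\Omega$ missed by any correct $V$. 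You instead prove the first bullet by reducing to the non-enumerability of $\Omega$ (Theorem~1, which the paper states but does not prove) via dovetailing over proof--machine pairs, and you prove the second bullet with the classical incompleteness-style construction: a Kleene-fixed-point program that searches for $V$-proofs of its own termination and diverges if it finds one. Both of your arguments are sound; your case analysis for the self-referential $P$ is complete, and invoking the effective form of the recursion theorem correctly discharges the ``explicitly write the code'' requirement. What your route buys is a cleaner conceptual separation (enumerability for bullet one, self-reference for bullet two) and a $P$ whose unprovability is about its own code rather than about disagreeing with each accepted machine; what the paper's route buys is elementarity --- a single ten-line program, no recursion theorem, and no dependence on the unproved Theorem~1. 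Note also that your second construction already yields the first bullet on its own ($P\in\Omega$ yet unaccepted), so if you want to avoid leaning on Theorem~1 you can drop your separate enumerability argument entirely.
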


In fact, given a verifier V, one can always write another verifier S(V) that is strictly stronger than V, in the sense that it accepts more terminating programs -- here S is a deterministic function that obtains one verifier from another.

In a philosophical essay ``Minds, Machines and G\"{o}del,"  Lucas
\cite{Lucas, Lucas2} suggested that G\"{o}del's theorems show that the human mind is more powerful than machines. This paper was heavily debated, heavily praised and criticized. Major supporters included Penrose ~\cite{Penrose, Penrose2, Penrose3}; some of the major critics were Hofstadter and Franz\'en\cite{Hof, Franzen3}.
Regardless of the points of the debate (see ~\cite{aaai, hypercomputing, chalmers, intuition, Franzen3} as a small sample of the arguments),  the following statements seem to be correct consequences of G\"{o}del's theorem (we refer to these as statements and not theorems as they talk about people.)

\begin{statement}[G\"{o}del]
If a person's reasoning can be modeled by a Turing machine, then there is some machine in \Omg\ that (s)he will not know to be terminating.
\end{statement}

Another way to think of G\"{o}del's theorem is as follows. Note that as stated before a verifier can be viewed as a Turing machine that accepts a subset of \Omg\  (if a verifier verifies proofs, then we will simply look at the subset of \Omg\ that can be proven to be terminating under that verifier).  A verifier  V is maximal in a set of verifiers if no other verifier accepts  a program to be terminating that is not already proven to be terminating by V.

\begin{theorem}\label{reasonable}[G\"{o}del]
\begin{itemize}

\item There is no maximal verifier. The set of verifiers is not enumerable.

\item Any reasonable belief system does not have a maximal trusted verifier.

\end{itemize}
\end{theorem}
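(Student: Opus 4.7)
The plan is to derive both bullets of Theorem \ref{reasonable} from the constructive strengthening operator $S$ already supplied by Theorem \ref{godel}. Recall that $S$ is a deterministic (total) function on verifiers such that $S(V)$ accepts every program $V$ accepts, and in addition accepts some witness program $P_V \in \Omega$ that $V$ does not accept. So first I would observe that the existence of $S$ immediately kills maximality: if $V$ were a maximal verifier, then $S(V)$ would be a verifier accepting $P_V$, a program not accepted by $V$, contradicting maximality. This handles the ``no maximal verifier'' part with essentially no work beyond what is already in the excerpt.

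For the non-enumerability claim I would use a diagonal-style argument against the operator $S$. Suppose for contradiction that the verifiers could be enumerated as $V_1, V_2, \dots$ by a Turing machine. I would then build a ``union'' verifier $V^\ast$ that on input $(\pi, T)$ parses $\pi$ as an index-proof pair $(i, \pi')$ and runs $V_i$ on $(\pi', T)$. Since each $V_i$ halts on all inputs and accepts only programs in $\Omega$, the machine $V^\ast$ is itself a correct, totally halting verifier, and it accepts every program that any $V_i$ accepts. Now apply $S$ to obtain $S(V^\ast)$, which is strictly stronger than $V^\ast$, hence strictly stronger than every $V_i$. Since $S(V^\ast)$ is a verifier, it must coincide with some $V_k$ in the enumeration; but then $V^\ast$ already accepts everything $V_k = S(V^\ast)$ accepts, contradicting the fact that $S(V^\ast)$ accepts a program not accepted by $V^\ast$.

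For the second bullet I would argue at a slightly informal level, since the claim is about belief systems rather than formal theories. The plan is to isolate a minimal notion of ``reasonable'': a belief system should be closed under the inference ``if I trust $V$ to be a sound verifier, then I also trust the soundness of any verifier whose soundness is a consequence of the soundness of $V$.'' Because $S$ is an explicit, total construction and the soundness of $S(V)$ is a theorem provable from the soundness of $V$, a reasonable agent who trusts $V$ must also trust $S(V)$. Hence no trusted verifier can be maximal among trusted verifiers, for $S(V)$ would be trusted, strictly stronger, and still a verifier.

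The main obstacle, and the one I would spend the most care on, is the second part: the word ``reasonable'' is doing real work, and one has to be precise about what closure properties a belief system must satisfy so that the step ``trust $V$ $\Rightarrow$ trust $S(V)$'' is legitimate and not a covert form of question-begging. For the first part the only technical subtlety is verifying that the diagonal verifier $V^\ast$ is indeed total and sound; both follow directly from the totality and soundness of the individual $V_i$ together with the decidability of parsing the pair $(i, \pi')$, so this is more a matter of careful bookkeeping than of genuine difficulty.
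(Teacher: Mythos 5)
Your proposal is correct and follows essentially the same route as the paper: the first bullet via the strengthening operator $S$ plus a union-of-the-enumeration verifier to rule out enumerability, and the second bullet via a closure property on belief systems that forces a trusted verifier accepting the witness program $P_V$. The only difference is one of packaging: where you leave ``reasonable'' as an informal closure under ``trust $V$ $\Rightarrow$ trust $S(V)$,'' the paper's appendix pins it down as closure under Modus Ponens together with two explicit axiom schemas (trusting $V$ entails believing $\mathrm{Terminating}(P_V)$, and believing $\mathrm{Terminating}(P)$ entails trusting the one-program verifier for $P$), which yields the same conclusion with a slightly weaker requirement than trusting all of $S(V)$.
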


Here a belief system is simply a collection of statements. A `trusted' verifier means that the belief system knows that whatever the verifier accepts is true. Informally, the above theorem says.

\begin{statement}[G\"{o}del]
A reasonable mathematician cannot point to a single Turing machine and claim that it enumerates all programs that (s)he believes to be terminating. For every Turing machine that enumerates terminating programs, (s)he can point to a better Turing machine that enumerates more terminating programs.
\end{statement}

This brings us to an important question: what process do mathematicians follow in accepting proofs? Is it a Turing machine? Is it certain that there are programs what we will never know to be terminating? Does there exist a terminating program for which we can never write a (correct) paper that proves it to be so?
To understand the complexity of these questions, we must first understand the essential ideas  (next section) behind G\"{o}del's theorems. In section~\ref{tprogressions} we will look at an interesting phenomena called Turing progressions that allows one to construct infinitely many (in fact we may even say {\em transfinitely} many) verifiers from a single verifier.  We will see how such progression of verifiers is related to {\em ordinal} numbers -- numbers beyond natural numbers. We summarize our conclusions about the questions posed above in section~\ref{summary}.

\section{Understanding G\"{o}dels argument}\label{godelproof}

We will now prove theorem~\ref{godel} (theorem~\ref{reasonable} follows from the same ideas and is proven in the appendix.)  The main idea is that given any correct verifier that accepts the set of programs $\Omega_0 \subset \Omega$, one can construct a terminating program not in $\Omega_0$;  and then one can obtain a better(stronger) verifier that accepts everything in $\Omega_0$ and the new terminating program.  The new program that one can show to be terminating is very simple. The proof uses diagonalization to construct a function that is different from each of the functions in an enumerable set of functions. We will work with programs that return a boolean value upon termination. Given a verifier V, a program T, and a proof pi that proves that T is terminating, we will use V(pi,T) to denote if V accepts the proof pi as a proof of the statement ``Terminating(T)". Thus V(pi,T) is also a boolean value. We will use the terms Turing machines and programs interchangeably.

This following program on input x, checks if x represents a proof, Turing machine pair ``(pi, T)" such that pi correctly proves to V that T is a terminating. If not it exits; otherwise, it
returns the opposite of T(x). We will prove that this program is outside the set of programs accepted by V to be terminating.

\begin{verbatim}
P = "function  P(x) 
       { 
          Let "($pi, $T)" = x  
         // i.e., parse the input x to check if it forms a  proof, Turing machine pair.
         // If so, store the Turing machine in the variable T and proof in pi
         // We use $T, $pi to denote the literal strings present in the 
         // variables T and pi respectively (in Perl style syntax)
         // If this is not possible return anything 

          if (V(pi,T))
          {
                // That is, check if the verifier accepts the Turing machine , proof pair.
                // The proof certifies the statement "\forall x Halts($T,x)" to V.
                // That is the  machine with code 
                // represented by string $T is terminating. 


              return not T(x)  // that is, return the opposite of T(x)
          }
          else 
               return 0. 
       }"
\end{verbatim} 

\begin{claim}
If V is a correct verifier, the program P always halts. Further it is not accepted by V (that is there is no proof of P to be terminating that is accepted by V)
\end{claim}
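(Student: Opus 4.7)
The plan is to verify the two assertions separately: first that $P$ halts on every input, and then that no proof accepted by $V$ can certify $P$ as terminating. Both follow from unwinding the definition of $P$, but the second part is where the diagonalization bites.

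For the halting claim, I would argue by a case split on the input $x$. If $x$ fails to parse as a pair $(\pi, T)$ encoding a proof and a Turing machine, then $P(x)$ returns $0$ and halts immediately. Otherwise, $P$ evaluates $V(\pi, T)$; since $V$ is a verifier, this call is guaranteed to halt (verifiers halt on all inputs by definition). If $V$ rejects, $P$ returns $0$. If $V$ accepts, then by correctness of $V$ we have $T \in \Omega$, i.e., $T$ is terminating, so in particular $T(x)$ halts; thus the line \texttt{return not T(x)} completes. In every branch $P(x)$ halts, so $P \in \Omega$.

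For the non-acceptance claim, I would proceed by contradiction via diagonalization. Suppose $V$ accepts $P$, i.e., there exists a proof $\pi_P$ such that $V(\pi_P, P) = \mathrm{true}$. Now evaluate $P$ on the specific input $x^* = (\pi_P, P)$. Parsing succeeds by construction, the guard $V(\pi_P, P)$ evaluates to true, so by the definition of $P$,
\[
P(x^*) = \mathrm{not}\; P(x^*),
\]
which is a Boolean contradiction. We have already shown $P \in \Omega$, so $P(x^*)$ is a well-defined Boolean value, making the equation impossible. Hence no such $\pi_P$ exists, and $V$ does not accept $P$.

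The main obstacle, or really the main subtlety, is justifying the self-reference in the diagonal step: one has to be comfortable that $P$ can syntactically reference its own source code (standard by Kleene's recursion theorem / quine constructions), and one has to keep straight the two roles $V$ plays --- as a halting subroutine inside $P$, and as the object being diagonalized against. The halting part of the proof crucially uses $V$'s correctness (otherwise $T(x)$ might diverge when $V$ accepts a bogus proof), while the diagonal part only uses that $V$ itself halts and the assumed acceptance. Keeping these two uses cleanly separated is the main thing to be careful about.
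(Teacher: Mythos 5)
Your proof is correct and follows essentially the same diagonalization as the paper: the halting argument is the identical case split using $V$'s correctness, and your contradiction at the input $x^*=(\pi_P,P)$ is just the paper's observation that $P$ differs from every accepted $T$ on the input $(\pi,T)$, instantiated at $T=P$. One small remark: the appeal to Kleene's recursion theorem is unnecessary here, since $P$ never references its own source code --- the self-reference enters only through the choice of input in the analysis, not in the construction of $P$.
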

\begin{proof}
The claim is a standard diagonalization argument.

The program checks if x is a pair (pi,T) where T is a Turing machine and pi proves to V that T always halts.
Only if it passes the check, does it execute T(x). Thus the only reason it may not halt is that T does not halt. But T is proven
to be halting by V. So this cannot happen.

The above program is different from every program accepted by V. To see this note that if T is accepted by V, then P differs from T on
at least one input namely ``(pi,T)" where pi is the proof that T is a terminating program.
\end{proof}

\section{New Verifiers from Old: Turing Progressions}\label{tprogressions}
Given that one can obtain a program P from a verifier V, we will show how one can get a sequence of verifiers each better(stronger) than the previous one. One way to think of the verifier is that, instead of accepting valid proofs, it is a non-deterministic program that accepts theorems provable in the theory -- this is achieved by simply making the verifier ``guess" the proof of a given theorem it is checking for provability. Thus the (non deterministic) verifier accepts all theorems provable in the theory. Given such a verifier for a theory, since we are only interested in statements of the form  ``\fall x Halts(T,x)", we can modify it into a program that accepts T if and only if V (as a non deterministic program) accepts the statement ``\fall x Halts(T,x)". Thus the modified program accepts some subset  $\Omega_0 \subset \Omega$ assuming the original theory was correct. 

Given a (deterministic) verifier for proofs, one can (programmatically) convert it into a non-deterministic verifier that accepts provable statements and vice versa (from the non deterministic verifier, the proof is simply the set of non deterministic bits). As a deterministic verifier we will used the notation V(pi,T) to denote if V accepts pi as a proof of T to be a terminating program.  When we view the same verifier as a non-deterministic verifier we will use the notation V(T) to indicate if it accepts T as a terminating program -- it will non-deterministically guess the proof pi if one exists. In this section we will use the latter notation.

As shown in the previous section, given a verifier V, we can write a program P that is terminating but is not accepted by V.
But now that we know P is a terminating program we can construct a better verifier that accepts all machines accepted by V and also accepts the program P.
Note that even though the theory on which the initial verifier was based (say for example Peano Arithmetic)  did not ``know" that P is a terminating program, we do know this if we believe in the correctness of the theory. Thus if we were to write down
the above proof in a paper we would not expect a reviewer to object to the proof and the conclusion that P terminates.

This gives us a method to getter better verifiers from old ones. Let S(V) denote the better verifier.

\begin{verbatim}
S(V) = "
       function S(V) (x) 
       { 
            if (x== $P) 
                 return TRUE;
            else 
                return V(x);
       }"

\end{verbatim}

Thus the above verifier accepts everything that V accepts and also accepts the program P.  So if we use $L(V)$ to denote the language accepted by a verifier V, then  $\Omega_0 = L(V)$
and we know that if $\Omega_1 = L( S(V))$ accepts a larger subset  so $ \Omega_0 \subsetneq \Omega_1 $. Thus we get:

\begin{claim}
If V is a verifier, then S(V) is a strictly stronger verifier; that is, it accepts a larger subset of terminating programs than V.
\end{claim}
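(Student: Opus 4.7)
The plan is to verify three things: that $L(V) \subseteq L(S(V))$, that $P \in L(S(V)) \setminus L(V)$, and that $S(V)$ is still a correct verifier (i.e., accepts only programs in $\Omega$). The first two give strict containment of accepted languages, and the third is what we need to conclude $S(V)$ is a verifier at all, not just an arbitrary program.

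First I would unpack the definition of $S(V)$: by inspection of its code, on input $x$ it returns TRUE exactly when $x$ is the literal string $P$ or when $V(x)$ returns TRUE. Hence $L(S(V)) = L(V) \cup \{P\}$, which immediately gives $L(V) \subseteq L(S(V))$. Next, I would invoke the previous claim in two ways. On one hand, that claim tells us $P \in \Omega$ (the diagonal program halts on every input, because the only way it could fail to halt is if some $T$ accepted by $V$ failed to halt on input $(\pi,T)$, contradicting correctness of $V$). Combined with $L(V) \subseteq \Omega$ (correctness of $V$), this yields $L(S(V)) \subseteq \Omega$, so $S(V)$ is a correct verifier. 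On the other hand, the same claim says $P \notin L(V)$, so the inclusion $L(V) \subseteq L(S(V))$ is strict.

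Finally I would note that $S(V)$ is still a total program — it halts on every input — because it calls $V$ on $x$ only in the branch where the equality test $x = P$ fails, and $V$ halts on every input by assumption. Therefore $S(V)$ satisfies the halting requirement of a verifier, accepts only terminating programs, and accepts strictly more terminating programs than $V$.

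The step that most needs care is the correctness check for $S(V)$, i.e., ensuring we did not accidentally add a non-terminating program; but this reduces entirely to the fact established in the previous claim that $P \in \Omega$, so there is no real obstacle here. The whole argument is a short bookkeeping exercise once the prior diagonalization claim is in hand.
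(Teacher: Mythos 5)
Your argument is correct and follows essentially the same route as the paper: the paper likewise reads off $L(S(V)) = L(V) \cup \{P\}$ from the code and invokes the preceding diagonalization claim (that $P$ terminates but is not accepted by $V$) to conclude $L(V) \subsetneq L(S(V)) \subseteq \Omega$. Your additional checks that $S(V)$ remains correct and total are sensible bookkeeping that the paper leaves implicit, but they do not change the substance of the argument.
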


Instead of just adding one program P to the set of programs accepted by  V, we could do something smarter.  We could add an axiom to the axioms of V that will allow it to accept the new program P. One way to prove that the program P is terminating is to assume that whatever program V proves to be terminating is terminating. Usually a verifier accepts proofs based on some axioms and deduction rules -- we can put a wrapper around it so that it only accepts programs that can be proven to be terminating with those axioms.  So we could add the new axiom 
R = ``\fall x Accepts(V,x) \implies\ Terminating(x)". If this was available to a verifier as an axiom it can derive the statement ``\fall x Halts(P,x)" as long as it has  the basic axioms of arithmetic and axioms to prove properties of programs and the deduction rule of Modus Ponens. Modus ponens is the prime deduction rule in most axiom systems -- it says that if ``x" and ``x \implies\ y" can be derived then ``y" can also be derived. So using modus ponens, from R, one can derive the statement ``\fall x Halts(T,x)" for any T that is known t be terminating. Hence one can formally deduce that P is also always terminating. The statement R denotes what is called as 1-consistency in proof theory.
In this case we will  implement S as follows: S(V) = Verifier obtained by adding the statement ``\fall x Accepts(V,x) \implies\ Terminating(x)" as  new axiom to the axioms of V. S can be formally written as a program just like the previous one.

\newcommand{\union}{\cup}

Thus S can be viewed as a function that takes a function (verifier) as an input and outputs a new function. Starting from $V_0 = V$, we can get $V_1 = S(V)$, $V_2 = S(V_1)$,...,$V_i = S^i (V)$. Even though the each $V_i$ accepts more
terminating machines than $V_0$, the above arguments show that they all accept only terminating programs. In fact for any fixed $i$, one can write down an explicit proof of its correctness in a way that would be acceptable to any mathematician who believes in the correctness of the base verifier (if you ask:  in which axiomatic system is the proof written, the response would be: think like a person reviewing the proof not like a Turing machine.)
But we can even write a single verifier $V_B$ (where B is used a symbol for a very BIG number greater than every natural number) that accepts the union of all these verifiers.
That is $L(V_B) = \union_{i \in N} L(V_i)$
 
The code for $V_B$ looks like this:

\begin{verbatim}
V_B = "     
 
      function V_B (x) 
       { 
            i = some natural number chosen non-deterministically;

            return V(i)(x)
       }

      function V(i) {
            if (i==0) return V;
            else return S(V(i-1))
      }

 }"
\end{verbatim}

\newcommand{\Caret}{$\wedge$}. 

But then now can define  $S(V_B), S(S(V_B)),...., S^i(V_B)$. Let's denote these by $V_{B+i}$. One can write a program for each of these verifiers. Not only that we can actually write a program that enumerates all these verifiers and accepts all 
their proofs giving us the new verifier  $V_{B+B}$ or let's call it $V_{B.2}$. Then we can get $V_{B.3},..,V_{B.i}$. Finally we can enumerate all these and write a single verifier $V_{B.B}$ that accepts their union. Continuing this way we get

\[V_0,V_1,...., V_B, V_{B+1}, V_{B+2},... V_{B.2},V_{B.3},V_{B.4},V_{B.i},..
\]
\[
V_{B.B}..,V_{B.B.B},..,V_{B^i},....V_{B^B},V_{B^{B^B}},....
V_{B^{B^{B^{..i\ times}}}},...V_{B^{B^{B^{.. B\ times}}}} .....
\]
Note that $V_B$ can be viewed as a verifier enumerator as it essentially enumerates all previous verifiers. $V_{B.(i+1)}$ is obtained from $V_{B.i}$ by writing a higher level function that translates one verifier enumerator into another. Similarly we get $V_{B.B}$ by writing a program for a verifier enumerator enumerator. $V_{B^i}$ is a verifier enumerator to the ith degree. $V_{B^B}$ enumerates all verifier enumerators of any finite degree.
Anyone with some familiarity with ordinal numbers will see the pattern here. This progression of theories is called Turing progressions. Alan Turing looked at such progressions in some papers 1937-1940  \cite{turing}. This was later taken up by 
Solomon Feferman \cite{fefferman, fefferman2} around 1960 and logicians Beklemishev \cite {Bek} and Artemov \cite{Artemov}  more recently. The main principle used in formalizing  such a progression is to assume some notion of the consistency (called as 1-consistency) of the previous theory in the next theory. By G\"{o}del's theorem the consistency  of the theory given by a verifier V (which is akin to the statement ``\fall x Accepts(V,x) \implies\ Terminating(x)") is not provable by V.  Assuming this consistency (which doesn't really seem like an assumption) is called as the {\em reflection principle} in proof theory.

\noindentpara {\bf Ordinal numbers:}
Ordinal numbers \cite{WikiOrd, cantor} are extensions of natural numbers introduced by Cantor to talk about  infinite sequences 
The finite ordinals are the natural numbers: 0, 1, 2, …,. Each number in this sequence can be viewed as the size of the set of all the numbers before it. Next, we need a number that denotes the size of the set of all the natural numbers and we denote this by  the least infinite ordinal  $\omega$. Next we have $\omega+1$,..,$\omega+2$,.... Next we get $\omega . 2$ which is the least ordinal that is greater than all ordinals of the form $\omega+i$. Continuing this way we get

\[
1,2,...,\omega, \omega + 1, \omega + 2, ..., \omega . 2, \omega . 2 + 1, ..., \omega^2, ..., \omega^3, ..., \omega^\omega, ...,\omega^{\omega^\omega}, ..., \omega^{\omega^{\omega^{{.}^{.}}}}, ...
\]

At some point you will get an ordinal number that represents the size of an uncountable set. We will say an ordinal is countable if the set of ordinals less than it (in the sequence) is countable. 
Some ordinals are simply obtained by adding $1$ to the previous ordinals; others such as  $\omega$ are called limit ordinals -- one cannot really point to a unique ordinal just before $\omega$.  How should we denote or visualize ordinals beyond the ones written above? This in itself is a very difficult question and has ramifications into questions related to Turing progressions. 
For our purposes, we can view the code of the verifiers that we get as a natural {\em notation} for some of the initial ordinal numbers ( this idea was suggested by Fefferman ). For a limit ordinal, as long as we can explicitly {\em enumerate} the verifiers before that point, we can write the code for the verifier corresponding to that ordinal.

Can we get verifiers for every countable ordinal? Apparently not; the highest ordinal for which we hope to get a verifier is bounded by something called the Church-Kleene ordinal~\cite{WikiOrd} -- ordinals less than this are called computable ordinals.  It is known how to get notations and corresponding verifiers up to some ordinal in this sequence way way below the Church-Kleene ordinal. But beyond that the questions are open. The more notations or verifiers we find the larger the subset of \Omg\ we know to be terminating. Now, it is known that {\em if} we had a good notation or verifier for every computable ordinal, we could in fact cover all of \Omg\ -- that is we would be able to prove all the terminating programs (Turing-Fefferman Completeness theorems~\cite{fefferman}). But unfortunately we do not know how to express the higher ordinals beyond a certain point well. Is the (good) notation of ordinals beyond a certain point known to be impossible? No. But Beklemishev says it seems difficult and may even be unlikely.

Another interesting question is whether it is possible to get richer theories from simpler ones by applying the reflection principle repeatedly; for example can one get ZFC (Zermelo Fraenkel set theory) from PA (Peano Arithmetic) at a certain ordinal? This question is studied by Beklemishev in \cite{Bek}. It turns out that using natural orinal notations,  one can go very far constructing the tower without ever reaching ZFC.  We get at ZFC at an ordinal that is called as the  ``proof-theoretic $\Pi_2^0$-ordinal of ZFC", but it is not known how big it is (there are only some known lower bounds). It is considered, perhaps, the most important open question in classical proof-theory.  However, people have analysed and computed similar ordinals exactly for some other, rather strong, axiomatic systems. There is a whole science of computing such ordinals called ``proof-theoretic ordinal analysis" (see ~\cite{Bek} and the references there in) .

For further exploration of this rich and mysterious phenomena of Turing progressions the reader is directed to \cite{fefferman, fefferman2, Franzen, Franzen2, Bek, Bek2} (warning: if you are not familiar with proof theory, reading the literature may a daunting task).

\section{Summary}\label{summary}

Here are some of the conclusions we can draw from the arguments presented so far.

\begin{itemize}
\item Is there a program that decides if a certain program is terminating? No.

\item Is there a single program that for every terminating program can accept a proof that it terminates? No.

\item For every verifier that accepts proofs for some subset of terminating programs to be terminating, can one always write a better more powerful verifier to do so? Yes.

\item Can people provably and correctly know any given non-terminating programs to be non-terminating? We don't know.

\item Can people write a program and claim that the program enumerates all terminating programs that people know to be terminating? No.

\item Are people more powerful than programs when it comes to knowing if a program is terminating or not? We don't know.

\item Do we understand the process by which people accept proofs of a program to be terminating? Don't think so. But we do seem to be able to {\em execute} the process.

\end{itemize}

\subsection*{Acknowledgements}
The paper resulted out of my wanderings into questions about provably halting programs and owes much to discussions with several people. I am grateful to Lev Beklemishev for explaining known facts about Turing progressions and for reviewing a draft of this paper -- much of the details about Turing progressions presented here came directly from those discussions. I am also grateful to Sergey Yekhanin who gave a patient ear to my myriad confused arguments related to this topic. I would also like to thank Herman Ruge Jervell for telling me that the progression of theories is called Turing progressions and for pointing me to Lev's work. Finally I would like to thank Toni Pitassi, Mark Braverman, Ran Raz, Gil Segev, and Alex Andoni for useful and interesting discussions  about a certain `series of proof verifiers' that seemed to come from one verifier -- in hindsight this simply turned out to be the Turing progressions.

\appendix

\section{Proof of theorem~\ref{reasonable}}
The first part of the theorem follows from the fact that for every verifier one can construct a better verifier. Note that the set of verifiers cannot be enumerable as otherwise there is a single verifier that can enumerate them all and capture their combined power. We now formalize the second part and prove it.

Let a belief system B denote a subset of statements in some language. These are interpreted to be the set of statements that are believed to be true by a person.
V is a verifier in B if it accepts programs and for every program P accepted by V, the statement ``Terminating(P)"  is in B.
V is a trusted verifier if the statement ``\fall x Accepts(V,x) \implies\ Terminating(x)" is in B.
Let Trusted(V) := \fall x Accepts(V,x) \implies\ Terminating(x)

Assume all statements have balanced parenthesis and may be removed if there is no ambiguity.
A belief system is reasonable if it is closed under the following axioms/deduction rules:

\begin{itemize}
\item[1.] Modus Ponens (MP) 

\item[2.]  Trusted(\$V)  \implies\ Terminating(\$P) where P is code for the program defined earlier that depends on the code from V.

This says that if V is a trusted verifier the program P is also believed to be terminating.

\item[3.]  Terminating(\$P) \implies\ Trusted(\$V) where V = ``V(x)\{ if x=``\$P" return TRUE else FALSE\}"

This says that if a program P is believed to be terminating then a verifier that only accepts P is a trusted verifier.
\end{itemize}

\begin{theorem}
 A reasonable and correct belief system  doesn't have a maximal trusted verifier
\end{theorem}

\begin{proof}
 Consider any trusted verifier V; we will show that it cannot be maximal.  Let T(V) be the set of programs T accepted by the verifier V (to be terminating.)  Look at the program P. It is different from every machine in T(V). So P is not in T(V). But since V is trusted, from axiom 2 (and MP), we know that statement  S $\in$  B. Also by axiom 3, (and MP) there is a trusted verifier that accepts S. So V cannot be maximal.
\end{proof}

\if 0
Say a Turing machine T is an verifier-producer if  every string accepted by T is a trusted verifier.
We will say it is a trusted verifier producer if "\fall y Accepts(T,y) \implies\ (\fall x Accepts(y,x) \implies\ x)"  is in B.

\begin{theorem}
 For any reasonable belief system B there is no trusted-verifier-enumerator that enumerates all trusted-verifiers.
\end{theorem}
\begin{proof}
For every subset that is enumberable there is an element that is greater?????
\end{proof}
\fi

\end{document}